\documentclass[11pt]{article}

\usepackage[margin=1.08in]{geometry}
\usepackage{lmodern}
\usepackage{amsmath,amssymb,amsthm,mathtools}
\usepackage{booktabs,array}
\usepackage{microtype}
\usepackage{appendix}
\usepackage{setspace}
\usepackage{xcolor}
\usepackage[round,authoryear]{natbib}
\usepackage[colorlinks=true,linkcolor=blue!55!black,citecolor=blue!55!black,urlcolor=blue!55!black]{hyperref}

\newtheorem{proposition}{Proposition}
\newtheorem{theorem}{Theorem}
\newtheorem{corollary}{Corollary}

\theoremstyle{remark}

\newcommand{\E}{\mathbb E}
\newcommand{\Prb}{\mathbb P}
\newcommand{\OO}{\mathrm O}
\newcommand{\TT}{\mathrm T}
\newcommand{\pos}[1]{\left[#1\right]_+}

\hypersetup{
  pdftitle={Hidden Eligibility and the Credibility of Adoption},
  pdfauthor={Georgy Lukyanov and Mikhail Zolotarev},
  pdfsubject={Social learning, targeted subsidies, and policy transparency},
  pdfkeywords={social learning, adoption subsidies, policy transparency, vaccination, Blackwell order}
}

\title{Hidden Eligibility and the Credibility of Adoption}
\author{Georgy Lukyanov\footnote{Toulouse School of Economics, Toulouse, France.} \and Mikhail Zolotarev\footnote{International College of Economics and Finance, HSE University, Moscow, Russia}}
\date{September 2026}

\begin{document}

\maketitle

\begin{abstract}
Public support for early adoption can affect what later users learn from observed choices. An adopter may have favourable private information, or may have received a rebate. We study a government that randomly assigns eligibility for a fixed rebate and chooses the program's coverage and reporting rule. We characterize when disclosing individual eligibility changes optimal coverage from zero to a positive rate under general convex program costs. This requires both a favourable marginal return to a disclosed program and a bound on the direct benefit available from a large opaque intervention. In the benchmark, disclosure allows the government to increase early adoption and welfare while leaving expected total adoption unchanged. We then allow imperfect compliance and private information for the late user. An eligibility label can be strictly informative without changing anyone's decision, in which case disclosure has no welfare value. The policy reversal survives when both extensions are present. The results identify when an eligibility record is a useful part of the design of an adoption program.
\end{abstract}

\noindent\textbf{Keywords:} social learning; adoption rebates; policy disclosure; early adoption; public program design.\\
\textbf{JEL classification:} D83; H23; D62.

Governments often support the early adoption of a new technology because its benefits extend beyond those received by the first users. A rebate can encourage households to install an energy-saving device, for example, and thereby produce an additional environmental benefit. Later households may also learn from these installations. An early adopter may have favourable information about the device, but she may instead have received a rebate large enough to make adoption worthwhile despite an unfavourable assessment. The intervention thus affects both the decision to adopt and the information conveyed by that decision.

This creates a question about the administration of the program. Suppose that later households know the rebate amount and the assignment procedure, but do not know which early users were eligible. Announcing the program does not tell them which adoptions occurred without support. An eligibility record identifies these choices and allows later users to interpret them separately. We ask when making this record public changes the government's choice from no intervention to a positive rebate program.

We consider two users who decide sequentially whether to adopt. The early user receives a private signal about a common payoff state. The late user observes the early choice and decides whether to adopt himself. Both face the same adoption cost. Before the early user receives her signal, the government randomly assigns eligibility for a fixed rebate. In the benchmark, the payment induces an eligible user to adopt after either signal. The government chooses coverage and commits either to report eligibility or to leave it unobserved. Its objective includes the users' material payoffs and an additional social benefit from early adoption.

An adoption by an ineligible user reveals favourable information. An adoption by an eligible user does not, since the rebate makes her action independent of her signal. Disclosure therefore improves the late user's information at any given coverage rate. Our main result concerns the optimal policy once coverage can also change. We show that zero coverage is uniquely optimal under opacity and positive coverage is optimal under disclosure if and only if two conditions hold. The marginal return at zero must be positive under disclosure and non-positive under opacity. In addition, the best direct return from early adoption must be smaller than the late surplus lost when the program eliminates social learning. The second condition rules out a profitable large opaque program that a comparison of marginal returns would miss.

The welfare effect is not explained by an increase in total adoption. With full compliance, expected total adoption under disclosure is one at every coverage rate. The program creates more early adopters and fewer late adopters. These are different users, so this is a change in the allocation of adoption across cohorts.\footnote{Each user has one opportunity to adopt. The model does not include a decision to wait and adopt later.} The extra social benefit from early adoption must compensate for the private-surplus loss and the cost of the program. Some restriction on expansion is also needed: with linear costs and unrestricted coverage, a profitable disclosed program implies that full coverage is profitable under opacity as well. Our characterization allows general increasing, strictly convex costs.

We next consider imperfect compliance. If eligibility rarely changes a user's action, even an eligible adoption remains persuasive. In this case the late user follows every adoption with or without disclosure. The label is strictly informative but has no effect on decisions or welfare. Once compliance exceeds a threshold, an eligible adoption ceases to justify imitation and disclosure becomes valuable. We characterize this threshold and extend the optimal-coverage result. We also allow the late user to receive a private signal. An exact example shows that the policy comparison survives when both imperfect compliance and informative late private information are present.

The paper is related to the literature on learning from observed actions, beginning with \citet{Banerjee1992} and \citet{BikhchandaniHirshleiferWelch1992}; see \citet{BikhchandaniHirshleiferTamuzWelch2024} for a survey. The closest connection is to interventions that change the actions from which consumers learn. \citet{KircherPostlewaite2008} study preferential treatment of informed consumers whose choices guide others, while \citet{BoseOroselOttavianiVesterlund2006} study pricing under observational learning. \citet{ChenPapanastasiou2021} allow an informed seller to seed learning with a fake purchase. Their intervention is strategic and may convey the seller's information. Here the government is uninformed about the state, assignment is random, and the additional early action is real. \citet{ChenChenIshida2025} study secret price reductions that attract favourable ratings. Their buyers learn from evaluations aggregated by a rating system; our late user observes the adoption choice itself.

\citet{PeresEtAl2020} provide a useful comparison. Their sequence contains occasional \emph{revealers}, who disregard earlier actions and follow their own private signals. The revealing probabilities are known, but individual revealer identities are unobserved. These actions supply new information and can break wrong cascades; the paper characterizes the optimal asymptotic learning rate. A fully responsive rebate recipient in our model instead adopts independently of her signal. Disclosure identifies an action that contains no private information, but cannot recover the information that the action has ceased to convey. Our finite model concerns the welfare and scale of an intervention, rather than the long-run rate of learning.

\citet{BenabouTirole2006} and \citet{AliBenabou2020} offer a broader connection between incentives, visibility, and inference. Their agents care about image or societal preferences, whereas ours learn about a common payoff state. A recent working paper by \cite{Brandletal2026} studies transfers and disclosure in an infinite sequence of decisions. We focus on a finite program with an additional benefit from early adoption and compare two reporting rules.

The rest of the paper is organized as follows. Section \ref{sec:model} describes the model. Section \ref{sec:benchmark} characterizes adoption and optimal coverage. Sections \ref{sec:compliance} and \ref{sec:private} consider imperfect compliance and late private information. Section \ref{sec:discussion} discusses implementation. Proofs and extensions to heterogeneous costs and several early users are in the Appendix.

\section{Model}\label{sec:model}

\subsection{Users and information}

There are two users and two decision dates. The common component of the adoption payoff is a state \(\theta\in\{0,1\}\), with either state equally likely. Both users pay the same cost \(c\) if they adopt; their material adoption payoff is \(\theta-c\). Non-adoption gives zero.

The early user observes a private signal \(S\in\{h,\ell\}\). It is correct with probability \(q\):
\[
\Prb(S=h\mid\theta=1)=\Prb(S=\ell\mid\theta=0)=q.
\]
Throughout the benchmark,
\begin{equation}\label{eq:restrictions}
\frac12<c<q<1.
\end{equation}
Her posterior is \(q\) after the favourable signal and \(1-q\) after the unfavourable one. Thus, without a rebate, she adopts only after \(h\). The late user has no private signal in the benchmark. He observes the early action and the prescribed disclosure before deciding. Section \ref{sec:private} relaxes this information assumption.

The inequality \(c>1/2\) means that adoption is unattractive at the prior. The inequality \(c<q\) means that favourable private information is enough to justify it. Equal costs and the absence of payoff spillovers keep the informational mechanism separate from conformity, network effects, or differences in willingness to pay.

\subsection{The rebate and the report}

The planner chooses a coverage rate \(r\in[0,1]\). The early user is eligible with probability \(r\), independently of the state and her signal. Let \(R=1\) denote eligibility and \(R=0\) ineligibility. The rebate amount \(s\) is fixed throughout the policy comparison and satisfies
\begin{equation}\label{eq:subsidy}
s>c-(1-q).
\end{equation}
It is paid upon adoption. Consequently, an eligible user adopts after either signal. An ineligible user adopts only after \(h\). Writing \(A\) for the early action, this rule is
\begin{equation}\label{eq:action}
A=\begin{cases}
1,&\text{if }S=h\text{ or the user is eligible},\\
0,&\text{otherwise}.
\end{cases}
\end{equation}
The fixed amount represents an existing rebate whose coverage and reporting arrangements can be changed. The model does not optimize a continuously varying subsidy schedule. Within the class of rebates that induce adoption after either signal, lowering the payment toward the bound in \eqref{eq:subsidy} changes behaviour only through any associated reduction in fiscal cost.\footnote{If the early user adopts when indifferent, the minimum payment in this class is \(c-(1-q)\). Without that tie convention, it is an infimum. A lower payment that never changes the early action cannot improve welfare when transfers are redistribution and implementation costs are non-negative. With heterogeneous responsiveness, jointly choosing the payment and coverage is a different policy problem.}

The planner commits to one of two reports. Under \emph{opacity}, the late user observes only the adoption decision \(A\). Under \emph{disclosure}, he observes both \(A\) and eligibility \(R\). We use subscripts \(\OO\) and \(\TT\), respectively; the latter refers to tagging eligibility. A tag records the offer, not whether it caused adoption. Some eligible users would have adopted without it.

The state is unknown to the planner. The planner knows the signal precision, adoption costs, benefit of early adoption, and program cost. It commits to coverage and reporting before assignment and before the early signal arrives. The early user then learns her eligibility and signal, chooses her action, and the late user receives the public report. Beliefs are Bayesian at every history with positive probability. The late user does not adopt when indifferent. Off-path beliefs at impossible histories have no effect on the results.

\subsection{The planner's objective}

An early adoption creates an additional social benefit \(e\geq0\), which is not received by the early user. It can represent an extra period of environmental benefit or another advantage of adoption during the initial rollout. We keep it constant across states. The government values both users' material payoffs and this additional benefit:
\begin{equation}\label{eq:welfare-definition}
W_D(r)=\E\big[A(\theta-c+e)+a_L(\theta-c)\big]-C(r),
\qquad D\in\{\OO,\TT\}.
\end{equation}
Here \(a_L\) is the late adoption decision. The rebate is a transfer and cancels in utilitarian welfare. The function \(C\) records real administration and implementation costs, together with any excess burden of raising public funds. For the fixed rebate, these costs are identical under the two reports. Disclosure itself is initially costless.

We assume \(C(0)=0\), \(C\) is increasing and continuously differentiable, and, for the exact policy characterization, strictly convex on \([0,1]\). A quadratic cost is used only for an illustration. Convexity may reflect increasing outreach or delivery costs. It is not needed for the information or adoption calculations.

\section{Eligibility disclosure and optimal coverage}\label{sec:benchmark}

\subsection{What an adoption reveals}

Under opacity, an observed adoption comes from either a favourable signal or an eligible user with an unfavourable signal. Bayes' rule gives
\begin{equation}\label{eq:mu}
\mu(r):=\Prb(\theta=1\mid A=1)
=\frac{q+r(1-q)}{1+r}.
\end{equation}
This posterior falls continuously from \(q\) at zero coverage to \(1/2\) at full coverage. Non-adoption, whenever observed, reveals the unfavourable signal and gives posterior \(1-q\).

Under disclosure, an ineligible adopter reveals the favourable signal. An eligible adopter conveys only the prior: adoption was certain under either signal. The late user therefore follows an ineligible adopter and does not follow an eligible one. Under opacity, he follows an adopter only when \(\mu(r)>c\).

\begin{proposition}\label{prop:adoption}
Let
\begin{equation}\label{eq:cutoff}
\bar r=\frac{q-c}{q+c-1}\in(0,1).
\end{equation}
Under opacity, the late user follows early adoption if and only if \(r<\bar r\). Under disclosure, he adopts only after an ineligible adoption. Expected early adoption is \((1+r)/2\), while expected total adoption is
\begin{equation}\label{eq:total}
U_{\TT}(r)=1,\qquad
U_{\OO}(r)=\begin{cases}
1+r,&r<\bar r,\\
(1+r)/2,&r\geq\bar r.
\end{cases}
\end{equation}
\end{proposition}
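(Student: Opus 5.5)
The argument is a direct Bayesian computation organised around the early action rule \eqref{eq:action}. We first compute the late user's posterior at each public history under each report. We then translate those posteriors into late decisions using the tie rule (no adoption when indifferent), and finally take expectations over \(\theta\), \(S\) and \(R\).

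\emph{Step 1: the cutoff.} Under opacity, the late user follows an adoption if and only if \(\mu(r)>c\), with \(\mu(r)\) given by \eqref{eq:mu}. Clearing the positive denominator gives
\[
q+r(1-q)>c(1+r)\iff r(q+c-1)<q-c .
\]
By \eqref{eq:restrictions}, \(q+c-1>0\), so this is equivalent to \(r<\bar r\). The same restrictions give \(q-c>0\), hence \(\bar r>0\). They also give \(q-c<q+c-1\), since this reduces to \(c>1/2\); hence \(\bar r<1\).

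After non-adoption, the posterior is \(1-q<1/2<c\), so the late user never adopts. This history has positive probability whenever \(r<1\); at \(r=1\) it is off path, and by the stated convention it does not matter.

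\emph{Step 2: disclosure.} Consider the three possible disclosed histories.
\begin{itemize}
\item History \((A=1,R=0)\) occurs only if \(S=h\). The posterior is \(q>c\), so the late user adopts.
\item History \((A=1,R=1)\) occurs with probability one given \(R=1\), and \(R\) is independent of \(\theta\). The posterior is \(1/2<c\), so the late user does not adopt.
\item History \((A=0,R=0)\) reveals \(S=\ell\), giving posterior \(1-q\), so the late user does not adopt.
\end{itemize}
Hence, under disclosure, the late user adopts exactly after an ineligible adoption.

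\emph{Step 3: expectations.} Since the states are equally likely, \(\Prb(S=h)=1/2\). The signal is independent of \(R\), so
\[
\Prb(A=1)=r+(1-r)\tfrac12=\tfrac{1+r}{2}.
\]
Under disclosure, late adoption occurs with probability \(\Prb(R=0,S=h)=(1-r)/2\). Total adoption is therefore
\[
\tfrac{1+r}{2}+\tfrac{1-r}{2}=1 .
\]
Under opacity with \(r<\bar r\), late adoption occurs exactly when \(A=1\), so total adoption is \(2\cdot\frac{1+r}{2}=1+r\). With \(r\ge\bar r\), the late user never adopts, so total adoption is \((1+r)/2\). The case \(r=\bar r\) gives indifference, and the tie rule assigns it to the second branch.

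There is no real obstacle. The only points needing care are the sign of \(q+c-1\) when rearranging the inequality, the tie convention at \(r=\bar r\), and the off-path history at \(r=1\).
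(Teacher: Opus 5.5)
Your proposal is correct and takes essentially the same route as the paper's proof: compute the posterior at each public history, apply the tie rule to obtain the late action, and add expected early and late adoption. The differences are minor. The paper re-derives \(\mu(r)\) from the state-conditional adoption probabilities \(q+r(1-q)\) and \(1-q+rq\), whereas you take it from \eqref{eq:mu}. You, in turn, explicitly verify \(\bar r\in(0,1)\), which the paper's proof leaves implicit.
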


Below the threshold, opacity induces imitation after every adoption. Above it, adoption has become too weak a recommendation and imitation stops. The jump in total adoption comes from the late user's discrete choice, not from a discontinuity in beliefs. Welfare will remain continuous at the threshold because the late user is indifferent there.

Disclosure changes which actions are followed. It does not preserve all the information that would have existed without intervention. Let \(V\) denote the early action without a rebate: \(V=1\) after \(h\) and zero after \(\ell\). For \(0<r<1\), the experiments satisfy
\[
V\ \succ_B\ (A,R)\ \succ_B\ A,
\]
where \(X\succ_B Y\) means that \(Y\) can be obtained from \(X\) by adding state-independent noise, while the reverse transformation is impossible. One generates the tagged record from \(V\) by drawing eligibility independently and applying \eqref{eq:action}; dropping the tag gives opacity. The strictness argument is in Appendix \ref{app:information}. This standard information comparison leaves open whether the additional information changes adoption or optimal coverage.

The adoption identity under disclosure is more informative about the policy's purpose. Increasing coverage by one unit raises expected early adoption by one half and lowers expected late adoption by one half. In the unfavourable-signal histories it creates an early adoption; in the favourable-signal histories it suppresses a late adoption because eligibility conceals the early signal. These changes occur in different histories. Thus, even a policy that leaves expected total adoption unchanged can alter its timing and welfare.

\subsection{Welfare}

To express welfare under the two reporting rules, define
\begin{equation}\label{eq:coefficients}
b=\frac{1-q-c+e}{2},\qquad
h=\frac{q-c}{2},\qquad
k=\frac{q+c-1}{2}.
\end{equation}
The coefficient \(b\) is the expected direct gain from a unit increase in coverage: only the half of users with an unfavourable signal change their early action. The quantity \(h\) is expected late surplus without intervention. Finally, \(k\) measures how quickly that surplus falls under opacity while the late user still follows adoption. Notice that \(k>h>0\) and \(\bar r=h/k\).

At zero coverage, welfare is \(W_0=q-c+e/2\). Under disclosure, increasing coverage gives the direct gain \(br\), but eliminates late surplus in the fraction \(r\) of histories with eligibility. Under opacity, late surplus falls until the late user stops adopting, after which it remains zero. Therefore
\begin{align}
W_{\TT}(r)-W_0&=(b-h)r-C(r),\label{eq:WT}\\
W_{\OO}(r)-W_0&=br-C(r)+\pos{h-kr}-h.\label{eq:WO}
\end{align}
The notation \([x]_+\) means \(\max\{x,0\}\). It appears because the late user can always choose not to adopt. These formulas are derived from the users' material payoffs; total adoption is not itself the welfare criterion.

At any fixed interior coverage, disclosure raises welfare by
\begin{equation}\label{eq:gap}
W_{\TT}(r)-W_{\OO}(r)=
\begin{cases}
r(c-1/2),&r<\bar r,\\
(1-r)h,&r\geq\bar r.
\end{cases}
\end{equation}
Below the threshold, it prevents the late user from following an eligible adoption whose average payoff is negative. Above the threshold, it preserves his response to an ineligible adoption. The gain follows from the value of information to the late user, whose payoff enters the government's objective. We next ask when it changes the optimal policy.

\subsection{When disclosure changes the policy}

Write
\[
a_{\TT}=b-h,\qquad a_{\OO}=b-k
\]
for the gross marginal returns to coverage at zero under disclosure and opacity. Their difference is \(c-1/2\). To compare global optima, define
\begin{equation}\label{eq:Pi}
\Pi(x)=\max_{0\leq r\leq1}\{xr-C(r)\}.
\end{equation}
This is the largest net return from a program with a constant gross marginal benefit \(x\). In particular, \(\Pi(b)\) is the best direct return from early adoption when all downstream benefits are set aside.

\begin{theorem}\label{thm:policy}
Under the assumptions in Section \ref{sec:model},
\begin{align}
\max_r W_{\TT}(r)-W_0&=\Pi(a_{\TT}),\label{eq:maxT}\\
\max_r W_{\OO}(r)-W_0&=\max\{\Pi(a_{\OO}),\Pi(b)-h\}.\label{eq:maxO}
\end{align}
The opaque optimum is uniquely zero and the disclosed optimum is strictly positive if and only if
\begin{equation}\label{eq:exactcondition}
a_{\OO}\leq C'(0)<a_{\TT},\qquad \Pi(b)<h.
\end{equation}
The disclosed optimum is unique. It satisfies \(C'(r_{\TT}^*)=a_{\TT}\) and lies strictly below one whenever \eqref{eq:exactcondition} holds.
\end{theorem}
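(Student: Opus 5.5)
Taking \eqref{eq:WT} and \eqref{eq:WO} as given, the argument is a piecewise maximization. For disclosure, \eqref{eq:WT} gives \(\max_r W_{\TT}(r)-W_0=\max_r\{a_{\TT}r-C(r)\}=\Pi(a_{\TT})\) directly, which is \eqref{eq:maxT}.

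For opacity, I will use the identity
\[
\pos{h-kr}-h=\max\{-kr,\,-h\},
\]
valid for \(r\ge0\). It turns \eqref{eq:WO} into
\[
W_{\OO}(r)-W_0=\max\{a_{\OO}r-C(r),\;br-C(r)-h\}.
\]
Taking the maximum over \(r\in[0,1]\) and exchanging the two maxima gives \(\max\{\Pi(a_{\OO}),\Pi(b)-h\}\), which is \eqref{eq:maxO}. This exchange is the key step. It avoids splitting at \(\bar r\) and checking which branch holds where, since each branch is a lower bound on \(W_{\OO}-W_0\) everywhere and one of them is attained at every \(r\).

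For the characterization, I will treat the map \(r\mapsto xr-C(r)\) for fixed \(x\). Because \(C\) is strictly convex and \(C(0)=0\), this map is strictly concave and has a unique maximizer on \([0,1]\).
\begin{itemize}
\item If \(x\le C'(0)\), then its derivative \(x-C'(r)\) is negative on \((0,1]\). The unique maximizer is \(r=0\) and \(\Pi(x)=0\).
\item If \(x>C'(0)\), the maximizer is strictly positive and \(\Pi(x)>0\).
\end{itemize}

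Applying this with \(x=a_{\TT}\), the disclosed optimum is unique and positive if and only if \(a_{\TT}>C'(0)\).

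For the opaque side, I will show that zero is the unique opaque optimum if and only if \(a_{\OO}\le C'(0)\) and \(\Pi(b)<h\).
\begin{itemize}
\item Sufficiency. The first branch \(a_{\OO}r-C(r)\) is strictly negative for \(r>0\). The second branch is at most \(\Pi(b)-h<0\). So every \(r>0\) gives \(W_{\OO}(r)<W_0\).
\item Necessity, first condition. If \(a_{\OO}>C'(0)\), then \(\Pi(a_{\OO})>0\) is attained at some positive \(r\), so zero is not optimal.
\item Necessity, second condition. If \(\Pi(b)\ge h\), the maximizer \(r_b\) of \(br-C(r)\) satisfies \(W_{\OO}(r_b)-W_0\ge0\). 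Here \(r_b>0\), because \(\Pi(b)\ge h>0\) forces \(b>C'(0)\). Then zero is not the unique optimum, since \(r_b\) ties it or beats it.
\end{itemize}
The necessity direction has to use ``uniquely'' in exactly this way to include the tie case.

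It remains to verify the first-order condition \(C'(r_{\TT}^*)=a_{\TT}\) and the interiority claim \(r_{\TT}^*<1\). I will argue by contradiction. Suppose \(r_{\TT}^*=1\). Then \(a_{\TT}\ge C'(1)\), and strict convexity gives \(C'\) strictly increasing, so \(a_{\TT}r-C(r)\) is increasing on \([0,1]\). Since \(b>a_{\TT}\), the map \(br-C(r)\) is increasing too. Hence \(\Pi(b)=b-C(1)=\Pi(a_{\TT})+h>h\), which contradicts \(\Pi(b)<h\). So the maximizer is interior, and the first-order condition \(C'(r_{\TT}^*)=a_{\TT}\) holds by differentiability of \(C\).

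I expect this interiority step to be the main obstacle: without the bound \(\Pi(b)<h\), the argument does not exclude a corner solution at full coverage.
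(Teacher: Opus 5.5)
Your proof is correct and follows the paper's argument: the same rewriting of $W_{\OO}-W_0$ as the maximum of the two branches $a_{\OO}r-C(r)$ and $br-C(r)-h$, the same unique-maximizer fact for $xr-C(r)$, and the same treatment of the tie case in the uniqueness of the opaque zero. The only variation is the interiority step. The paper notes that the two records coincide at $r=1$, so $W_{\TT}(1)=W_{\OO}(1)<W_0$. You reach the same inequality $a_{\TT}-C(1)=b-C(1)-h\le\Pi(b)-h<0<\Pi(a_{\TT})$ by contradiction; your monotonicity detour is unnecessary, since $\Pi(b)\ge b-C(1)$ already suffices.
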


The first condition compares the marginal returns to starting the program under the two reports. The second concerns its scale. Even the most attractive program based only on the direct early benefit must fail to compensate for the loss of the baseline late surplus \(h\). Otherwise the planner may prefer a large opaque program despite the damage to learning. Hence a negative marginal return at zero is insufficient to establish that zero coverage is optimal.

Theorem \ref{thm:policy} also identifies what early adoption must be worth. With equal costs,
\begin{equation}\label{eq:urgency}
a_{\TT}=\frac{e-(2q-1)}{2}.
\end{equation}
The private-surplus loss from replacing late adoption after a favourable signal with early adoption after an unfavourable signal is \((2q-1)/2\) per unit of coverage. The additional early benefit must cover this loss and the marginal program cost. If \(e\leq2q-1\), no positive rebate program can improve welfare under either reporting regime when \(C\) is non-negative.

Strict convexity ensures that the disclosed optimum is unique. The reason for assuming increasing marginal costs is that they can make further expansion unprofitable. With linear cost \(C(r)=\kappa r\), disclosed welfare is linear. If some disclosed program improves on no intervention, then \(a_{\TT}>\kappa\), and full coverage improves welfare under both reports. Thus a zero-opaque, positive-disclosed comparison requires a force limiting expansion, such as increasing marginal cost, a capacity constraint, or diminishing early benefits. Our theorem treats the first case explicitly.

If the inequalities are strict also at the left of \eqref{eq:exactcondition}, they survive small changes in the parameters. For the quadratic family below, this means an ordinary neighbourhood of \((q,c,e,\kappa,d)\) in which all maintained inequalities continue to hold. For a general cost function, the analogous statement uses small changes in both the function and its first derivative on \([0,1]\). 

\subsection{An illustration and the cost of reporting}

Take \(q=0.90\), \(c=0.75\), \(e=1.20\), and
\[
C(r)=0.05r+\frac12r^2.
\]
All values are expressed in normalized adoption-payoff units. Here \(b=0.275\), \(h=0.075\), and \(k=0.325\). The direct surplus of an induced low-signal early adoption is \(1-q-c+e=0.55>0\). Nevertheless, the optimal opaque policy has zero coverage. Disclosure supports coverage \(r_{\TT}^*=0.15\). Indeed, \(a_{\OO}=-0.05<0.05<a_{\TT}=0.20\), while \(\Pi(b)=0.0253125<h\).

\begin{table}[htbp]
\centering
\caption{Adoption and welfare under alternative policies}\label{tab:example}
\small
\begin{tabular}{lrrrr}
\toprule
Policy & Early & Late & Total & Welfare\\
\midrule
No rebate & .500 & .500 & 1.000 & .75000\\
Opaque, \(r=.15\) & .575 & .575 & 1.150 & .72375\\
Opaque, \(r=.50\) & .750 & .000 & .750 & .66250\\
Disclosed, \(r=.15\) & .575 & .425 & 1.000 & .76125\\
\bottomrule
\end{tabular}
\end{table}

Table \ref{tab:example} illustrates why adoption counts alone do not determine the welfare ranking. A small opaque program raises total adoption and lowers welfare. A larger opaque program can lower both. The optimal disclosed program raises welfare while leaving expected total adoption unchanged. Figure \ref{fig:coverage} shows the full comparison, including the threshold at which opaque imitation stops. 

\begin{figure}[t]
\centering
\includegraphics[width=\textwidth]{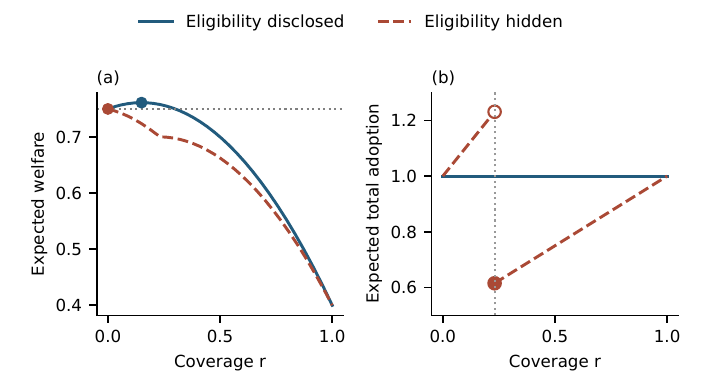}
\caption{Coverage, welfare, and adoption in the illustrative economy. Solid lines show eligibility disclosure; dashed lines show opacity. The horizontal dotted line in panel (a) is welfare without intervention. The vertical dotted line in panel (b) marks \(\bar r=3/13\). At that point the late user chooses non-adoption; the upper point is open}\label{fig:coverage}
\end{figure}

Reporting may require a real fixed cost \(F\), for example to create a verifiable eligibility record. When Theorem \ref{thm:policy} applies, the largest cost worth paying is \(\Pi(a_{\TT})\): this is the improvement of the optimal disclosed program over the optimal opaque policy. In the example it equals \(0.01125\). A fixed reporting cost below this amount preserves the positive program; a cost above it makes no intervention preferable. This calculation presumes that reporting has no additional effect on behaviour, such as stigma or a privacy loss that differs across users.

\section{Imperfect compliance}\label{sec:compliance}

Suppose now that eligibility does not always induce an early user with an unfavourable signal to adopt. Let \(\rho\in[0,1]\) be that user's adoption probability conditional on eligibility. An eligible user with a favourable signal still adopts, and ineligible users continue to follow their signals. We assume the additional randomization is independent of the state conditional on the signal.

One implementation is random rebate activation. An assigned offer is usable with probability \(\rho\), independently of the state and private signals. The early user observes whether it is usable. A usable offer pays the rebate in \eqref{eq:subsidy}; otherwise she acts without a rebate. The public label records assignment, while activation is not reported. This produces partial compliance without asking a strictly optimizing user to mix between unequal payoffs. We hold \(\rho\) fixed and include the resulting expected real implementation costs in \(C\). The model does not cover compliance that directly depends on the state after conditioning on the private signal.

Under opacity, the posterior after adoption is \(\mu(r\rho)\): the effective probability of inducing a low-signal adoption is \(r\rho\). Under disclosure, an ineligible adoption still gives posterior \(q\), but an eligible adoption now gives posterior \(\mu(\rho)\). An observed non-adoption gives posterior \(1-q\) under either report. Thus the informativeness of a treated action depends on responsiveness to the rebate.

\begin{proposition}\label{prop:compliance}
For \(0<r<1\) and \(0<\rho\leq1\), the no-rebate action strictly Blackwell-dominates the tagged record, which strictly Blackwell-dominates opaque adoption. Nevertheless:

\smallskip\noindent
(i) If \(0\leq\rho<\bar r\), the late user follows every adoption under both reports. Adoption and welfare coincide at every coverage rate. At \(\rho=\bar r\), welfare still coincides, although the tie convention can make adoption differ.

\smallskip\noindent
(ii) If \(\rho>\bar r\), the late user does not follow eligible adoptions under disclosure. The welfare gain from disclosure is strictly positive for \(0<r<1\) and equals
\begin{equation}\label{eq:compliancegap}
W_{\TT}(r)-W_{\OO}(r)=
\begin{cases}
r(\rho k-h),&r\rho<\bar r,\\
(1-r)h,&r\rho\geq\bar r.
\end{cases}
\end{equation}
\end{proposition}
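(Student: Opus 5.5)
We prove the three parts in order: the Blackwell chain by a garbling construction, then part (i) from posteriors, then part (ii) by explicit welfare accounting.

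\textbf{Blackwell chain.} We reuse the construction of Appendix \ref{app:information}, with the fixed action rule replaced by a signal-independent randomization. From \(V\), draw eligibility \(R\) with probability \(r\) and an activation indicator with probability \(\rho\), independently of \(\theta\). Set \(A=1\) if \(V=1\), or if \(R=1\) and the offer is active. This produces \((A,R)\) as a garbling of \(V\), and dropping \(R\) gives \(A\).

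For strictness, we compare the sets of posteriors generated.
\begin{itemize}
\item The experiment \(V\) yields posteriors in \(\{q,1-q\}\) only.
\item \((A,R)\) yields \(\mu(\rho)\in(1/2,q)\) with positive probability whenever \(0<r<1\) and \(\rho>0\).
\item Opaque \(A\) yields the posterior \(\mu(r\rho)\), which differs from \(q\) and from \(\mu(\rho)\) because \(\mu\) is strictly decreasing and \(r\rho<\rho\).
\end{itemize}
A garbling of an experiment puts its posteriors in the convex hull of the finer experiment's posteriors, and the two-point distributions must match. So the reverse garblings fail: \(V\) cannot be produced from \((A,R)\), since \((A,R)\) cannot generate the extreme posterior distribution of \(V\). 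For \((A,R)\) versus \(A\), we argue directly. The opaque adoption posterior \(\mu(r\rho)\) is a strict average of \(q\) and \(\mu(\rho)\), so the mean-preserving-spread order is strict.

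\textbf{Part (i).} If \(\rho<\bar r\), then \(\mu(\rho)>c\), using \(\mu(x)>c\iff x<\bar r\) from Proposition \ref{prop:adoption}. Likewise \(\mu(r\rho)\ge\mu(\rho)>c\), and \(q>c\). So every adoption, eligible or not, is followed under both reports. Non-adoption always gives posterior \(1-q<c\) and is never followed. The late action is therefore the same function of the history under both reports, and adoption and welfare coincide.

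At \(\rho=\bar r\) the late user is indifferent after an eligible adoption, so following and not following give equal payoff. Welfare is unchanged, but the tie convention changes late adoption.

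\textbf{Part (ii), recomputing welfare.} Now \(\rho>\bar r\) gives \(\mu(\rho)<c\), so eligible adoptions are not followed under disclosure. We recompute welfare by histories. The early-action term and \(C\) are common to both reports, so only late surplus differs.

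Under disclosure, late surplus is earned only when the user is ineligible with signal \(h\). That event contributes \((1-r)\cdot\tfrac12(q-c)=(1-r)h\).

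Under opacity with \(r\rho<\bar r\), the late user follows every adoption. His surplus is
\[
\Prb(A=1)\,(\mu(r\rho)-c)=\tfrac12\big[q+r\rho(1-q)-c(1+r\rho)\big]=h-r\rho k.
\]
Under opacity with \(r\rho\ge\bar r\), late surplus is \(0\), with weak indifference at the boundary.

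Subtracting gives the displayed gap:
\[
(1-r)h-(h-r\rho k)=r(\rho k-h)\quad\text{in the first case},\qquad (1-r)h\quad\text{in the second}.
\]

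\textbf{Part (ii), positivity.} In the first case, \(\rho k-h>0\) is equivalent to \(\rho>\bar r=h/k\). In the second case, \(h>0\) and \(r<1\). Both are therefore strictly positive for \(0<r<1\).

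The main obstacle is the strictness claim in the Blackwell chain. I would settle it by the posterior-support argument above, checking explicitly that \(\mu(\rho)\neq q\) and \(\mu(r\rho)\notin\{q,\mu(\rho)\}\) under \(0<r<1\) and \(0<\rho\le 1\).
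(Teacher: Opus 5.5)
Your proposal is correct and follows essentially the paper's route: the same independent eligibility-and-activation garbling of \(V\), strictness from pooling distinct posteriors (your extreme-point argument for the first link is the same fact as the paper's remark that a treated adoption pools \(q\) and \(1-q\)), and the same late-surplus accounting for the gap in part (ii), \((1-r)h\) under disclosure against \([h-r\rho k]_+\) under opacity. One harmless slip: \(\mu(\rho)\in[1/2,q)\) rather than \((1/2,q)\), since \(\mu(1)=1/2\), but it is still strictly inside \([1-q,q]\), which is all your strictness argument uses.
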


Part (i) distinguishes informativeness from decision value. When the rebate changes behaviour infrequently, an eligible adoption remains sufficiently good news. Learning eligibility changes the late user's posterior but not his decision. An informative label then has no value in this particular adoption problem. At \(\rho=0\), the label is itself uninformative about the state and there is no strict information ordering.

Part (ii) gives the condition under which disclosure changes decisions. Once eligible adoption is no longer persuasive, opacity either induces the late user to follow eligible adopters or prevents him from following ineligible ones. Thus the value of an accurate eligibility label depends on how strongly users respond to the rebate.

For \(\rho>\bar r\), the exact policy comparison is obtained by replacing the direct coefficient \(b\) with \(\rho b\), the initial opaque slope with \(\rho(b-k)\), and the disclosed slope with \(\rho b-h\). In particular,
\begin{corollary}\label{cor:compliancepolicy}
If \(\rho>\bar r\), zero coverage is uniquely optimal under opacity and positive coverage is optimal under disclosure if and only if
\begin{equation}\label{eq:compliancepolicy}
\rho(b-k)\leq C'(0)<\rho b-h,
\qquad \Pi(\rho b)<h.
\end{equation}
\end{corollary}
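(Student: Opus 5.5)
The plan is to repeat the argument of Theorem \ref{thm:policy} with the welfare functions recomputed for compliance \(\rho>\bar r\). The first step is to derive the two welfare expressions.

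Under disclosure, by Proposition \ref{prop:compliance}(ii), the late user follows only ineligible adoptions. Expected early adoption is \((1+r\rho)/2\), so the direct gain is \(\rho b r\). The late user's surplus \(h\) is preserved exactly in the ineligible histories. This gives
\[
W_{\TT}(r)-W_0=(\rho b-h)r-C(r).
\]
Under opacity, the posterior after adoption is \(\mu(r\rho)\). The late user therefore follows every adoption if and only if \(r\rho<\bar r\). Repeating the computation behind \eqref{eq:WO} with \(r\) replaced by \(r\rho\) in the informational terms, but keeping \(C(r)\), gives
\[
W_{\OO}(r)-W_0=\rho b r-C(r)+\pos{h-k\rho r}-h .
\]
As a consistency check, the difference of these two expressions should reproduce \eqref{eq:compliancegap}.

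Next I use \(\pos{x}=\max\{x,0\}\) to write
\[
W_{\OO}(r)-W_0=\max\{\rho(b-k)r-C(r),\ \rho b r-C(r)-h\}.
\]
Exchanging the maximum over \(r\) with this pointwise maximum gives
\[
\max_r W_{\OO}-W_0=\max\{\Pi(\rho(b-k)),\,\Pi(\rho b)-h\}
\qquad\text{and}\qquad
\max_r W_{\TT}-W_0=\Pi(\rho b-h),
\]
which are the analogues of \eqref{eq:maxO} and \eqref{eq:maxT}.

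The key lemma uses strict convexity of \(C\). For any \(x\), the map \(r\mapsto xr-C(r)\) is strictly concave on \([0,1]\). Its unique maximizer is \(0\) if and only if \(x\le C'(0)\), in which case \(\Pi(x)=0\). If \(x>C'(0)\), then \(\Pi(x)>0\) and the maximizer is positive.

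I then read off the corollary's conditions in three parts.
\begin{itemize}
\item The disclosed optimum is strictly positive if and only if \(\rho b-h>C'(0)\).
\item Zero is an opaque optimum if and only if \(\Pi(\rho(b-k))\le 0\) and \(\Pi(\rho b)-h\le 0\). The first inequality is equivalent to \(\rho(b-k)\le C'(0)\).
\item For uniqueness at zero I argue in both directions. If \(\Pi(\rho b)<h\), then the second branch is strictly negative for every \(r\). The first branch is uniquely maximized at \(0\) by the lemma, so zero is the unique opaque optimum. Conversely, suppose \(\Pi(\rho b)=h\), and note that this forces \(\rho b>C'(0)\). Then the maximizer of \(\rho b r-C(r)\) is some \(r>0\) attaining value \(0\), which is a second opaque optimum. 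So uniqueness requires the strict inequality \(\Pi(\rho b)<h\).
\end{itemize}

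The main obstacle is justifying the opaque welfare formula, in particular that the late surplus while imitating equals \(h-k\rho r\). I would verify this directly. While following, the late user's expected payoff is \(\tfrac12[q-c]+\tfrac{r\rho}{2}[(1-q)-c]\), which equals \(h-k\rho r\) after using \(h=(q-c)/2\). Above the threshold \(r\rho\ge\bar r\), the late user is indifferent or better off not adopting. Together with \(\bar r=h/k\), this is exactly what the \(\pos{\cdot}\) expression encodes. The condition \(\rho>\bar r\) is needed only so that eligible adoptions are not followed under disclosure; \(\rho k>h\) follows from \(\rho>\bar r=h/k\).
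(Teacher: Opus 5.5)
Your proposal is correct and follows the paper's own route. The paper likewise computes \(W_{\TT}(r)-W_0=(\rho b-h)r-C(r)\), writes opaque welfare as \(\max\{\rho(b-k)r-C(r),\,\rho br-C(r)-h\}\), and then invokes the proof of Theorem~\ref{thm:policy}; you spell out the steps that it delegates to that proof, namely the branch-by-branch maximization and the uniqueness argument when \(\Pi(\rho b)=h\).
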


The marginal interval has width \(\rho k-h\) and disappears exactly at the compliance boundary. Full compliance is therefore sufficient but unnecessary for the policy reversal. It is necessary for the benchmark's exact conservation of expected total adoption. When \(\rho>\bar r\), disclosed total adoption is instead \(1-r(1-\rho)/2\), and hence falls with coverage if compliance is imperfect. The welfare improvement in this case is accompanied by a reduction in expected total adoption.

\section{Private information for the late user}\label{sec:private}

We now allow the late user to have private information of his own. Let the late user additionally observe a finite-valued signal \(Z\), independent of the early signal and assignment conditional on \(\theta\). Its state-conditional probabilities are \(f_1(z)\) and \(f_0(z)\). Starting from a public belief \(m\), his expected optimized payoff is
\begin{equation}\label{eq:J}
J(m)=\sum_z\pos{m f_1(z)(1-c)-(1-m)f_0(z)c}.
\end{equation}
For each private-signal realization, the expression inside the brackets is the probability-weighted payoff from adoption. Taking its positive part implements the choice between adoption and non-adoption. In the absence of private information, \(J(m)=[m-c]_+\).

Under full compliance, expected late surplus becomes
\begin{align}
L_{\TT}^{W}(r)&=\frac{1-r}{2}\{J(q)+J(1-q)\}+rJ(1/2),\label{eq:privateT}\\
L_{\OO}^{W}(r)&=\frac{1+r}{2}J(\mu(r))+\frac{1-r}{2}J(1-q).\label{eq:privateO}
\end{align}
Unlike the benchmark, the late user may adopt after an eligible action or after non-adoption, depending on his private information. We therefore do not retain the single imitation cutoff or the constant-total-adoption claim in this general extension.

\begin{proposition}\label{prop:private}
At fixed coverage, disclosure weakly raises welfare with a late private signal as specified above. Its gain is
\begin{equation}\label{eq:privategap}
\frac{1-r}{2}J(q)+rJ(1/2)-\frac{1+r}{2}J(\mu(r))\geq0.
\end{equation}
For \(0<r<1\), the inequality is strict if and only if \(J\) is not affine on \([1/2,q]\). If the late signal reveals the state perfectly, the gain is zero.
\end{proposition}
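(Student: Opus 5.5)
Since coverage, rebate and program cost are identical under the two reports, the welfare gap at fixed \(r\) equals the difference in expected late surplus, \(L_{\TT}^{W}(r)-L_{\OO}^{W}(r)\). The terms \(\tfrac{1-r}{2}J(1-q)\) cancel, and the remainder is the expression in \eqref{eq:privategap}. The argument then rests on one property of \(J\): it is convex on \([0,1]\). Each summand in \eqref{eq:J} is the positive part of an affine function of \(m\), so it is convex, and a finite sum of convex functions is convex.

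The inequality follows from the identity
\[
\frac{1+r}{2}\mu(r)=\frac{1-r}{2}\,q+r\cdot\frac12 .
\]
To check it, note that both sides equal \((q+r(1-q))/2\), using \eqref{eq:mu}. Dividing by \((1+r)/2\) writes \(\mu(r)\) as a convex combination of \(q\) and \(1/2\), with weights \(\lambda=(1-r)/(1+r)\) and \(1-\lambda=2r/(1+r)\). Jensen's inequality for the convex \(J\) gives \(J(\mu(r))\le\lambda J(q)+(1-\lambda)J(1/2)\). Multiplying by \((1+r)/2\) yields \eqref{eq:privategap}. This is just the Blackwell ranking \((A,R)\succ_B A\) with a convex value function, made explicit.

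The strictness claim is the main obstacle. For \(0<r<1\) both weights are strictly positive and \(\mu(r)\) lies strictly inside \((1/2,q)\). Since \(J\) is piecewise affine with finitely many kinks, I would argue the two directions separately.
\begin{itemize}
\item[(a)] If \(J\) is affine on \([1/2,q]\), then Jensen holds with equality, so the gap is zero.
\item[(b)] If \(J\) is not affine on \([1/2,q]\), there is a kink \(m_z\) of some summand in \((1/2,q)\). At such a point the slope of \(J\) strictly increases. This needs care: the kink must not be cancelled by other summands. It cannot be, because each summand is convex, so its slope jump is nonnegative and no other summand can offset it. The chord of \(J\) from \(1/2\) to \(q\) then lies strictly above \(J\) at every interior point, in particular at \(\mu(r)\). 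To see this, note that for a convex function which is not affine on an interval, the chord–function gap is concave on the interval, zero at the endpoints and not identically zero, hence strictly positive in the interior.
\end{itemize}
Here "not affine" means that some \(m_z=f_0(z)c/(f_1(z)(1-c)+f_0(z)c)\) lies in the open interval, for a \(z\) with positive probability. I would also verify that a kink exactly at an endpoint does not break affinity on the closed interval.

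Finally, suppose the late signal reveals the state perfectly, so that \(f_1\) and \(f_0\) have disjoint supports. Then each summand is either \(m f_1(z)(1-c)\) or \(0\), because the term \(-(1-m)f_0(z)c\) vanishes on the support of \(f_1\). Hence \(J(m)=m(1-c)\) is affine everywhere, and the gap is zero by (a), or simply by the linear identity above.
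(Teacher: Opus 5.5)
Your proof is correct and is essentially the paper's argument. Both use convexity of \(J\) as a finite sum of positive parts of affine functions, the identity \(\mu(r)=\frac{1-r}{1+r}q+\frac{2r}{1+r}\cdot\frac12\), Jensen's inequality, the equality case of the chord inequality, and \(J(m)=m(1-c)\) under a perfectly revealing signal. The two differences are small: you prove the strictness step (the concave chord gap vanishing at the endpoints) that the paper only asserts, and you take \eqref{eq:privateT}--\eqref{eq:privateO} as given, whereas the paper briefly justifies them by the conditional independence of \(Z\).
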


The intuition is as follows. Information about eligibility is valuable when it changes the action chosen after at least one private-signal realization. If the late user's own information makes the same decision optimal throughout the relevant interval of public beliefs, the label adds no decision value. This is the same distinction that appears in Proposition \ref{prop:compliance}.

The comparison of optimal coverage can also survive a late private signal. Retain the illustrative parameters of Section \ref{sec:benchmark}, let compliance be \(\rho=0.9\), and let \(Z\) be a symmetric binary signal with precision \(p=0.6\). The global optima are
\begin{equation}\label{eq:jointcertificate}
r_{\OO}^*=0,\qquad r_{\TT}^*=\frac{49}{400}=0.1225,
\qquad W_{\TT}^*-W_0=\frac{2401}{320000}=0.007503125.
\end{equation}
Both optima are unique. Appendix \ref{app:certificate} proves the global comparison by maximizing all four quadratic branches of opaque welfare.  The adoption inequalities and the positive gaps between branch maxima are strict, so the example persists under small simultaneous perturbations of its parameters. A perfectly revealing late signal eliminates the informational reason for the policy difference.

\section{Implementation and scope}\label{sec:discussion}

Consider an illustrative municipal pilot for an energy-saving device. A publicly announced lottery assigns private rebate offers among pilot participants before they receive their private assessments. Later households see installations, but do not see the offer letters or any performance outcomes. The municipality could report installations alone, or record which installations occurred among households that had not been assigned a rebate. The model compares these two records. It presumes that assignment is genuinely random and that the municipality can commit to an accurate report. The example is hypothetical; it specifies the institutional conditions to which the model applies.

With several exchangeable early users, disclosure need not identify individuals. Under independent random assignment and full compliance, the total number assigned and the number of adopters among the unassigned contain the same state information as the complete tagged history. Under partial compliance, a sufficient anonymous report is the full table of adoption and non-adoption by assignment. Appendix \ref{app:cohort} proves these statements. They concern information about the common state, not the causal effect of the rebate on each adopter. Correlated signals, differing signal quality, or informative assignment may require a richer report.

The random-assignment assumption limits the policy interpretation. If a government preferentially supports users with strong private information or unusually high expected gains, eligibility can itself be evidence about quality. If everyone receives the same subsidy, the policy changes an adoption threshold rather than randomly overwriting actions. In general the resulting experiments need not be Blackwell ordered: lowering a common cutoff can make adoption less favourable news while making non-adoption more unfavourable news. The present result applies most directly to randomized pilots, private coupons, and invitation-based rebates with exogenous assignment.

The analysis also separates learning from choices from learning from experience. We observe the decision to adopt, not a review, a realized payoff, or a verifiable performance measure. Where later users can inspect outcomes, subsidizing early adoption may create new information as well as conceal existing private information. This additional benefit is absent here. Similarly, we do not model a strategic government that signals the state through its chosen policy, privately informed assignment, endogenous participation in the pilot, or a seller's price response.

Heterogeneous late adoption costs smooth the sharp response in the benchmark. The planner then knows the distribution of costs, while each late user knows her own realization. Appendix \ref{app:heterogeneity} gives the adoption and welfare formulas and a sufficient condition preserving the policy comparison. The extension shows how dispersion in adoption costs smooths the late response.

\section{Conclusion}

We have studied a rebate program in which the choices of early users provide information to later decision makers. The government can improve the use of this information by reporting eligibility. Under the conditions we identify, the record changes optimal policy: a positive program becomes worthwhile even though no intervention is optimal when eligibility is hidden.

The result depends on both the benefit of early adoption and the cost of expanding the program. Disclosure does not recover the private information suppressed by the rebate, so the early benefit must compensate for this loss. Increasing marginal costs can then prevent the government from preferring a large opaque intervention. Finally, the record must affect a later decision. With sufficiently low compliance, an eligible adoption remains persuasive and reporting eligibility has no welfare value.

For a government administering a randomized pilot, these results suggest treating the eligibility record as part of the program's design. Its value depends on how users respond to support and what later users can learn independently. The amount of adoption and the quality of later decisions should therefore be considered together when assessing the program.

\begin{appendices}
\section{Information and the benchmark calculations}\label{app:information}

\begin{proof}[Proof of Proposition \ref{prop:adoption} and the information order]
Conditional adoption probabilities are \(q+r(1-q)\) in state one and \(1-q+rq\) in state zero. Their sum is \(1+r\), giving \eqref{eq:mu}. For \(r<1\), non-adoption reveals \(S=\ell\). Under disclosure, \(R=0,A=1\) reveals \(h\); \(R=0,A=0\) reveals \(\ell\); and \(R=1\) carries the prior because both signals lead to adoption. Solving \(\mu(r)>c\) gives \(r<\bar r\). Expected early adoption is \((1+r)/2\). Expected late adoption is \((1-r)/2\) under disclosure, \((1+r)/2\) under opacity below the threshold, and zero otherwise. Adding the two decisions proves \eqref{eq:total}.

For the information order, generate eligibility independently of the no-rebate action \(V\) and apply \eqref{eq:action}. This is a state-independent transformation from \(V\) to \((A,R)\); dropping \(R\) is a transformation to \(A\). Each is strict when \(0<r<1\). Conditional on a treated adoption, the no-rebate posterior takes the distinct values \(q\) and \(1-q\) with positive probability. Conditional on an opaque adoption, the tagged posterior takes the distinct values \(q\) and \(1/2\) with positive probability. The posterior-mean identities for these transformations, followed by strict Jensen inequality for the square function, show a strict decrease in expected squared posteriors at each step. Neither experiment can therefore be recovered from its garbling.
\end{proof}

To verify \eqref{eq:WT}--\eqref{eq:WO}, expected early surplus is
\[
\frac{q-c+e}{2}+\frac r2(1-q-c+e).
\]
Under disclosure, expected late surplus is \((1-r)(q-c)/2=(1-r)h\). Under opacity, its value is the positive part of the probability-weighted adoption payoff:
\[
\pos{\frac{1+r}{2}\{\mu(r)-c\}}
=\pos{\frac{q-c-r(q+c-1)}2}=\pos{h-kr}.
\]
Subtracting \(W_0=(q-c+e)/2+h\) proves both formulas. Their difference gives \eqref{eq:gap}; at \(r=\bar r\), \(h=k\bar r\), so the two expressions coincide.

\section{Optimal coverage}\label{app:policy}

\begin{proof}[Proof of Theorem \ref{thm:policy}]
Equation \eqref{eq:WO} can be written as the maximum of two expressions:
\[
W_{\OO}(r)-W_0
=\max\{(b-k)r-C(r),\;br-C(r)-h\}.
\]
Maximizing this identity over the compact interval \([0,1]\) gives \eqref{eq:maxO}; \eqref{eq:maxT} follows directly from \eqref{eq:WT}.

Strict convexity and differentiability of \(C\) imply that \(xr-C(r)\) has a unique maximizer. Its maximizer is zero if and only if \(x\leq C'(0)\), and it is positive if and only if \(x>C'(0)\). Thus the first opaque expression is uniquely maximized at zero with value zero precisely when \(a_{\OO}\leq C'(0)\). The second must have a strictly negative maximum for zero to be the unique opaque optimum. If its maximum were positive, zero would not be optimal. If its maximum were zero, its maximizer would be positive because its value at zero is \(-h<0\), creating a second opaque optimum. Hence unique optimality of zero is equivalent to \(a_{\OO}\leq C'(0)\) and \(\Pi(b)<h\).

The disclosed optimum is positive exactly when \(a_{\TT}>C'(0)\). Combining these conditions proves \eqref{eq:exactcondition}. It cannot be one: at full coverage the two information records coincide, so \(W_{\TT}(1)=W_{\OO}(1)<W_0\), whereas the positive disclosed optimum exceeds \(W_0\). The interior first-order condition follows. Finally, maxima of continuous functions on a fixed compact interval vary continuously under uniform changes in those functions. Strict inequalities therefore persist under the perturbations described in the text.
\end{proof}

For \(C(r)=\kappa r+dr^2/2\), \(d>0\), the function in \eqref{eq:Pi} is explicitly
\[
\Pi(x)=\begin{cases}
0,&x\leq\kappa,\\
(x-\kappa)^2/(2d),&\kappa<x<\kappa+d,\\
x-\kappa-d/2,&x\geq\kappa+d.
\end{cases}
\]
This expression verifies the numerical example without a search over coverage. It also shows why checking only the derivative near zero can miss a profitable large opaque program.

\section{Partial compliance and private information}\label{app:extensions}

\begin{proof}[Proof of Proposition \ref{prop:compliance} and Corollary \ref{cor:compliancepolicy}]
Generate an activation indicator independently, with success probability \(\rho\). Conditional on a low signal, adoption occurs only if both assignment and activation succeed. Opacity therefore gives posterior \(\mu(r\rho)\); assignment plus adoption gives \(\mu(\rho)\). Non-adoption always reveals the low signal. The same state-independent transformations as in Appendix \ref{app:information} give the weak information order. For \(r,\rho>0\), a treated adoption pools both signals, so the first comparison is strict. For \(r<1\), opaque adoption pools an ineligible posterior \(q\) with a treated posterior \(\mu(\rho)<q\), so the second is strict.

The late user follows eligible adoption if and only if \(\rho<\bar r\). His expected surplus under disclosure and opacity, respectively, is
\[
(1-r)h+r\pos{h-\rho k},\qquad \pos{h-r\rho k}.
\]
When \(\rho\leq\bar r\), both expressions equal \(h-r\rho k\). For the strict inequality \(\rho<\bar r\), late actions also agree after every adoption; at equality, an eligible adoption is a zero-surplus event and the tie convention can change uptake. When \(\rho>\bar r\), subtracting the two expressions proves \eqref{eq:compliancegap} and its strict positivity for interior \(r\).

Expected early surplus equals \((q-c+e)/2+r\rho b\). In the latter regime, disclosed welfare relative to \(W_0\) is \((\rho b-h)r-C(r)\), while opaque welfare is
\[
\max\{\rho(b-k)r-C(r),\;\rho br-C(r)-h\}.
\]
The proof of Theorem \ref{thm:policy} applies directly and yields \eqref{eq:compliancepolicy}.
\end{proof}

\begin{proof}[Proof of Proposition \ref{prop:private}]
The function \(J\) is convex because each summand in \eqref{eq:J} is the positive part of an affine function. Under disclosure the public posterior has values \(q\), \(1-q\), and \(1/2\), with probabilities \((1-r)/2\), \((1-r)/2\), and \(r\). Under opacity it has values \(\mu(r)\) and \(1-q\), with probabilities \((1+r)/2\) and \((1-r)/2\). Conditional independence lets us apply the same continuation function \(J\) to each public posterior, giving \eqref{eq:privateT}--\eqref{eq:privateO}.

The identity
\[
\mu(r)=\frac{1-r}{1+r}q+\frac{2r}{1+r}\frac12
\]
and convexity prove \eqref{eq:privategap}. With both weights positive, equality in this chord inequality holds if and only if \(J\) is affine on the interval between the two endpoints. If \(Z\) reveals \(\theta\), then \(J(m)=m(1-c)\), which is affine.
\end{proof}

\subsection{Exact certificate with both extensions}\label{app:certificate}

Take \((q,c,e,\kappa,d,\rho,p)=(.9,.75,1.2,.05,1,.9,.6)\). A late favourable private signal after a public low signal gives posterior \(1/7<c\). After an ineligible adoption, even an unfavourable late signal gives posterior \(6/7>c\). After an eligible adoption, even a favourable late signal gives posterior \(297/479<c\). Thus tagged actions are the same as in the high-compliance case of Section \ref{sec:compliance}:
\[
W_{\TT}(r)-W_0=\frac{49}{400}r-\frac12r^2.
\]
Its unique optimum and value are exactly \eqref{eq:jointcertificate}.

Under opacity, an observed non-adoption never induces the late user to adopt, even with a favourable signal. Following adoption, the probability-weighted payoffs for the favourable and unfavourable late signals are, respectively,
\[
\frac{21}{400}-\frac{459}{4000}r,
\qquad
\frac{9}{400}-\frac{711}{4000}r.
\]
Consequently,
\begin{equation}\label{eq:certificateopaque}
W_{\OO}(r)-W_0=
\frac{79}{400}r-\frac12r^2-\frac3{40}
+\pos{\frac{21}{400}-\frac{459}{4000}r}
+\pos{\frac9{400}-\frac{711}{4000}r}.
\end{equation}
The sum of two positive parts is the maximum over the four subsets of included terms. Maximizing each resulting quadratic on \([0,1]\) gives the gains
\[
-\frac{17759}{320000},\quad
-\frac{610439}{32000000},\quad
-\frac{1673759}{32000000},\quad 0,
\]
where the last case includes both terms and is uniquely maximized at zero. All other branch maxima are strictly negative, proving the unique global opaque optimum. The strict posterior inequalities, negative derivative of the last branch at zero, and strict gaps for the remaining branch maxima persist under small parameter changes. This supplies an open neighbourhood with both imperfect compliance and informative late private information.

\section{Heterogeneous late costs}\label{app:heterogeneity}

Keep the early cost \(c\), but let the late user's cost \(T\) be drawn independently of the state, early signal, and assignment from a distribution \(G\) with continuous density. Its support is \([\underline t,\overline t]\subset(1/2,q)\). The planner and the early user know the distribution; only the late user observes her realization before choosing. The planner commits before that draw. Equivalently, the formulas give per-capita outcomes for a unit mass of late users who see the same early action.

Define the expected optimized payoff at public belief \(m\) by \(J_G(m)=\int[m-t]_+\,dG(t)\). Under full compliance, expected late surplus is
\[
L_{\TT}^W(r)=\frac{1-r}{2}(q-\E[T]),\qquad
L_{\OO}^W(r)=\frac12\int\pos{q-t-r(q+t-1)}\,dG(t).
\]
All types decline after non-adoption or an eligible adoption under disclosure. Expected late uptake under opacity is \((1+r)G(\mu(r))/2\), which varies continuously in coverage. Expected total uptake is
\[
U_{\OO}(r)=\frac{1+r}{2}\{1+G(\mu(r))\}.
\]
If \(G\) has density \(g\), differentiation at an interior posterior gives
\[
U_{\OO}'(r)=\frac{1+G(\mu(r))}{2}
-\frac{2q-1}{2(1+r)}g(\mu(r)).
\]
Total uptake therefore falls only when enough late users lie near the marginal threshold. Heterogeneity removes the discrete jump; it does not imply that every distribution produces a decline.

For a simple policy condition, write \(H=(q-\E[T])/2\), \(K=(q+\E[T]-1)/2\), and \(r_{\min}=(q-\overline t)/(q+\overline t-1)\). Under a quadratic cost, suppose
\[
b-K<\kappa<b-H,\qquad \kappa+dr_{\min}>b.
\]
Then opaque welfare is strictly decreasing. Before \(r_{\min}\), every late type follows adoption and its derivative is \(b-K-\kappa-dr<0\). After \(r_{\min}\), the derivative of late surplus is non-positive and \(b-\kappa-dr<0\). Disclosed welfare has a unique positive optimum \((b-H-\kappa)/d<r_{\min}\). In particular, distributions sufficiently concentrated around \(c=.75\) preserve this comparison in the illustrative economy: at the degenerate limit, \(.05+3/13>.275\), and the remaining inequalities are strict. This establishes persistence for continuous heterogeneity without imposing the quadratic form on the main theorem.

\section{Several early users and anonymous reporting}\label{app:cohort}

Let there be \(n\) simultaneous early users with signals that are independent conditional on the common state, each with precision \(q\). Assignment is independent across users and independent of all signals and the state. With full compliance, let \(M\) be the number assigned and \(J\) the number of adopters among the \(n-M\) unassigned. Conditional on \(M=m\), the count \(J\) is binomial with success probability \(q\) in state one and \(1-q\) in state zero. Thus
\[
\Prb(\theta=1\mid M=m,J=j)
=\frac{q^j(1-q)^{n-m-j}}
{q^j(1-q)^{n-m-j}+(1-q)^jq^{n-m-j}}.
\]
The likelihood of the full labelled record depends on the state only through these two counts. They therefore contain the same information about \(\theta\) as the complete record. When total adoption is already public, publishing \(M\) is enough to recover \(J\), since every assigned user adopts.

Under partial compliance, let \(K_1\) and \(K_0\) be adopters among the assigned and unassigned. The sufficient anonymous record is \((M,K_1,K_0)\), equivalently the complete assignment-by-adoption table. Conditional adoption probabilities among the assigned are \(q+\rho(1-q)\) and \(1-q+\rho q\); among the unassigned they are \(q\) and \(1-q\). Conditional independence makes the full likelihood a product of four powers, one for each cell of this table. The individual identities add no state information under these exchangeability assumptions. Reporting only total assignment and total adoption is no longer generally sufficient.

For completeness, under full compliance the opaque adoption count is a garbling of the no-rebate count: conditional on \(j\) no-rebate adopters, it equals \(j+B\), where \(B\) is binomial with \(n-j\) trials and success probability \(r\). This finite-sample information comparison does not establish permanent learning failure. For fixed \(r<1\), the state-conditional adoption rates differ, and an arbitrarily large independent early cohort reveals the state through its empirical adoption rate.

\end{appendices}

\end{document}